\newcommand{\Z}{\mathbb{Z}}
\newcommand{\Q}{\mathbb{Q}}
\newcommand{\R}{\mathbb{R}}
\newcommand{\K}{\mathcal{K}}
\newcommand{\Field}{\mathbb{F}}
\newcommand{\rank}{\mbox{rank}}
\newcommand{\ann}{\mathsf{a}}
\newcommand{\UF}{\mathcal{U}\hspace{-1pt}\mathcal{F}}
\newcommand{\SC}{\K_{\text{DS}}}
\newcommand{\C}{\mathcal{C}}
\newcommand{\AV}{{\mathcal{AV}}}
\numberwithin{equation}{section}
\theoremstyle{plain}
\newtheorem{proposition}[equation]{Proposition}
\theoremstyle{definition}
\newtheorem{definition}[equation]{Definition}
\begin{document}
\title[The Compressed Annotation Matrix for Computing Persistent Cohomology]{The Compressed Annotation Matrix: an Efficient Data Structure
  for Computing Persistent Cohomology}
%
\author{Jean-Daniel Boissonnat}
\author{Tamal K. Dey}  
\author{Cl\'ement Maria}
%
\email{ \{jean-daniel.boissonnat, clement.maria\}@inria.fr}
\email{tamaldey@cse.ohio-state.edu}
%
%
\maketitle
%
%
\begin{abstract} 
Persistent homology with coefficients in a field $\Field$
coincides with the same for cohomology because of duality.
We propose
an implementation of a recently introduced algorithm for persistent
cohomology that attaches annotation vectors with the simplices.
We separate the representation of the simplicial complex from the
representation of the cohomology groups, and introduce a new data structure for maintaining
the annotation matrix, which is more compact and reduces substancially the amount
of matrix operations. In addition, we propose a heuristic to simplify further the representation of the
cohomology groups and improve both time and space complexities.
The paper provides 
a theoretical analysis,  as well as a detailed experimental
study of our implementation and comparison with state-of-the-art
software for persistent homology and cohomology.

\bigskip

This article appeared in Algorithmica 2015~\cite{DBLP:journals/algorithmica/BoissonnatDM15}. An extended abstract appeared in the proceedings of the European Symposium on Algorithms 2013~\cite{DBLP:conf/esa/BoissonnatDM13}.
\end{abstract}
%


%
\section{Introduction}
Persistent homology~\cite{DBLP:books/daglib/0025666} is an algebraic method for measuring the topological features of a space induced by the sublevel sets
of a function. Its generality and stability with regard to noise have made it a widely used tool for the study of data. A common approach is the study of the topological invariants of a nested family of simplicial complexes built on top of the data, seen as a set of points in a geometric space. This approach has been successfully used in various areas of science and engineering, as for example in sensor networks, image analysis, 
and data analysis 
where one typically needs to deal with big data sets in high dimensions.
Consequently, the demand for designing efficient algorithms and 
software to compute persistent homology of filtered simplicial 
complexes has grown.

The first persistence algorithm~\cite{DBLP:journals/dcg/EdelsbrunnerLZ02,DBLP:journals/dcg/ZomorodianC05} can be implemented
by reducing a matrix defined by face incidence relations, 
through column operations. The running time is $O(m^3)$ where $m$ is the number of simplices of the simplicial complex and, despite good performance in practice, Morozov proved that this bound is tight~\cite{Morozov05persistencealgorithm}. 
Recent optimizations taking advantage of the special structure of the  matrix to be reduced
have led to significant progress in the theoretical analysis~\cite{DBLP:journals/comgeo/ChenK13,DBLP:conf/compgeom/MilosavljevicMS11} as well as in practice~\cite{Bauer:arXiv1303.0477,Chen11persistenthomology}.

A different approach~\cite{DBLP:journals/corr/abs-1107-5665,DBLP:journals/dcg/SilvaMV11} interprets the persistent homology groups in terms of their dual, the persistent cohomology groups.  The cohomology algorithm has been reported to work better in practice than the standard homology algorithm~\cite{DBLP:journals/corr/abs-1107-5665} but this advantage seems to fade away when 
optimizations are employed to the homology algorithms~\cite{Bauer:arXiv1303.0477}. An elegant description of the cohomology algorithm, using the notion of annotations~\cite{DBLP:conf/swat/BusaryevCCDW12}, has been introduced in~\cite{DBLP:conf/compgeom/DeyFW14}
and used to design more general algorithms for maintaining cohomology groups under simplicial maps.

In this work, we propose an implementation of the annotation-based algorithm for computing persistent cohomology. 
A key feature of our implementation is a distinct separation between the representation of the simplicial complex and the representation of the cohomology groups. Currently the simplicial complex can be represented either by its Hasse diagram or by using the more compact simplex tree~\cite{DBLP:journals/algorithmica/BoissonnatM14}. The cohomology groups are stored in a separate data structure that represents a compressed version of the annotation matrix. As a consequence, the time and space complexities of our algorithm depend mostly on properties of the cohomology groups we maintain along the computation and only linearly on the size of the simplicial complex.

Moreover, maintaining the simplicial complex and the cohomology groups separately allows us
to reorder the simplices while keeping the same persistent cohomology. This significantly reduces the size of the cohomology groups to be maintained, and improves considerably both the time and memory performance as shown by our
detailed experimental analysis on a variety of examples.  Our method compares favorably with state-of-the-art software for computing persistent homology and cohomology.
%

\medskip

\noindent
{\bf Background:}
\label{sec:background}
A {\em simplicial complex} is a pair $\mathcal{K}=(V,S)$ where $V$ is
a finite set whose elements are called the {\em vertices} of $\mathcal{K}$ and
$S$ is a set of non-empty subsets of $V$ that is
required to satisfy the following two conditions~:
1.  $p\in V \Rightarrow \{  p\} \in S$ and 2. $\sigma\in S, \tau\subseteq \sigma \Rightarrow
\tau\in S$.
Each element $\sigma\in S$ is called a {\em simplex} or a \emph{face} of $\mathcal{K}$
and, if $\sigma\in S$ has precisely $s+1$ elements ($s\geq -1$),
$\sigma$ is called an $s$-simplex and its dimension is
$s$. The dimension of the simplicial complex $\mathcal{K}$ is
the largest $k$ such that $S$ contains a $k$-simplex. We define $\K^p$
to be the set of $p$-dimensional simplices of $\K$, and note its size
$|\K^p|$. Given two
simplices $\tau$ and $\sigma$ in $\K$, $\tau$ is a subface
(resp. coface) of $\sigma$ if $\tau \subseteq \sigma$ (resp. $\tau
\supseteq \sigma$). The \emph{boundary} of a simplex $\sigma$, denoted
$\partial \sigma$, is the set of its subfaces with codimension $1$.

A \emph{filtration}~\cite{DBLP:books/daglib/0025666} of a simplicial
complex is an order relation on its simplices which respects
inclusion. Consider a simplicial complex $\K=(V,S)$ and a function $\rho : S
\rightarrow \R$. We require $\rho$ to be monotonic in the sense that, for
any two simplices $\tau \subseteq \sigma$ in $\K$, $\rho$ satisfies
$\rho(\tau) \leq \rho(\sigma)$. We will call $\rho(\sigma)$ the
\emph{filtration value} of the simplex $\sigma$. Monotonicity implies that the sublevel
sets $\K (r) =
\rho^{-1}(-\infty,r]$ are subcomplexes of $\K$, for every $r \in \R$.
Let $m$ be the number of simplices of $\K$, 
and let $(\rho_i)_{i=1\cdots n}$ be the $n$ different values $\rho$ takes on
the simplices of $\K$. Plainly  $n \leq m$, and we have  the following  sequence of $n+1$
subcomplexes:

\[
\emptyset = \K_0 \subseteq \cdots \subseteq \K_n = \K, \hspace{0.2cm} 
-\infty = \rho_0 < \cdots < \rho_n, \hspace{0.2cm}  \K_i = \rho^{-1}(-\infty,\rho_i]
\]

\noindent
Applying a (co)homology functor to this sequence of simplicial
complexes turns 

\noindent
(combinatorial) complexes into (algebraic) abelian
groups and inclusion into group homomorphisms. Roughly speaking, a
simplicial complex defines a domain as an arrangement of local bricks
and (co)homology catches the global features of this domain, like the
connected components, the tunnels, the cavities, etc. The
homomorphisms catch the evolution of these global features when
inserting the simplices in the order of the filtration.
Let $H_p(\K)$ and $H^p(\K)$ denote respectively the homology and cohomology
groups of $\K$ of dimension
$p$ with coefficients in a field $\Field$. 
The filtration induces a sequence of homomorphisms
in the homology and cohomology groups in opposite directions: 

\begin{eqnarray}
0 = H_p(\K_0) \rightarrow H_p(\K_1) \rightarrow \cdots \rightarrow
H_p(\K_{n-1}) \rightarrow
H_p(\K_n) = H_p(\K) \label{hom}\\
0 = H^p(\K_0) \leftarrow H^p(\K_1) \leftarrow \cdots \leftarrow 
H^p(\K_{n-1}) \leftarrow 
H^p(\K_n) = H^p(\K) \label{cohom}
\end{eqnarray}

\noindent
We refer to~\cite{DBLP:books/daglib/0025666} for an introduction to
the theory of homology and
persistent homology. Computing the persistent
homology of such a sequence consists in pairing each simplex that creates a
homology feature with the one that destroys it. The usual output is a
\emph{persistence diagram}, which is a plot of the points
$(\rho(\tau),\rho(\sigma))$ for each persistent pair $(\tau,\sigma)$. 
It is known that because of duality the homology and cohomology sequences above
provide the same persistence diagram~\cite{DBLP:journals/dcg/SilvaMV11}.\\ 
\indent The original persistence 
algorithm~\cite{DBLP:journals/dcg/EdelsbrunnerLZ02} considers the
homology sequence in Equation~\ref{hom} that aligns with the filtration
direction. It detects when a new homology class is born and
when an existing class dies as we proceed forward through the filtration. 
Recently, a few algorithms have considered the cohomology sequence
in Equation~\ref{cohom} which runs in the opposite direction of the
filtration~\cite{DBLP:journals/corr/abs-1107-5665,DBLP:journals/dcg/SilvaMV11,DBLP:conf/compgeom/DeyFW14}. 
The birth of a cohomology class coincides
with the death of a homology class and the death of a cohomology
class coincides with the birth of a homology class.
Therefore, by tracking a cohomology basis along the filtration
direction and switching the notions of births and deaths, one
can obtain all information about the persistent homology of the complex.
The algorithm of de Silva et al.~\cite{DBLP:journals/dcg/SilvaMV11} computes the 
persistent cohomology following this principle which is reported
to work better in practice than the original persistence
algorithm~\cite{DBLP:journals/corr/abs-1107-5665}. Recently, 
Dey et al.~\cite{DBLP:conf/compgeom/DeyFW14} recognized that
tracking cohomology bases provides a simple and natural
extension of the persistence algorithm for filtrations
connected with general simplicial maps (and not simply inclusion). Their algorithm is based on
the notion of annotation~\cite{DBLP:conf/swat/BusaryevCCDW12} and, 
when restricted to only inclusions,
is a re-formulation of the algorithm of de Silva et al.~\cite{DBLP:journals/dcg/SilvaMV11}. 
Here we follow this annotation based algorithm.
%
\section{Persistent Cohomology Algorithm and Annotations}
\label{sec:annotations}
In this section, we recall the annotation-based persistent cohomology algorithm 
of~\cite{DBLP:conf/compgeom/DeyFW14}. It maintains a cohomology basis under
simplex insertions, where representative cocycles are maintained by the
value they take on the simplices.
We rephrase the description of this algorithm with coefficients in an
arbitrary field $\Field$, and use standard field notations $\langle \Field, +,
\cdot, -,/, 0, 1\rangle$.
\begin{definition}
Given a simplicial complex $\K$, let $\K^p$ denote the set of
$p$-simplices in $\K$. An annotation for $\K^p$ is an assignement
$\ann^p : \K^p \rightarrow \Field^g$ of an $\Field$-vector $\ann_\sigma =
\ann ^p(\sigma)$ of same length $g$ for each $p$-simplex $\sigma \in
\K^p$. We use $\ann$ when there is no ambiguity on the dimension. We
also have an induced annotation for any $p$-chain $c = \sum_i f_i \sigma_i$ given
by linear extension: $\ann_{c} = \sum_i f_i \cdot \ann_{\sigma_i}$.
\end{definition}
\begin{definition}
An annotation $\ann: \K^p\rightarrow \Field^g$ is {\em valid} if:\\
1. $g=\rank\, H_p(\K)$ and 2. two $p$-cycles $z_1$ and $z_2$  have
$\ann_{z_1}=\ann_{z_2}$ iff their homology
classes $[z_1]$ and $[z_2]$ are identical.
\end{definition}
\begin{proposition}[\cite{DBLP:conf/compgeom/DeyFW14}]
The following two statements are equivalent:\\
1. An annotation $\ann : \K^p \rightarrow \Field^g$ is valid\\
2. The cochains $\{ \phi_j \}_{j=1\cdots g}$ given by $\phi_j(\sigma)
  = \ann_\sigma [j]$ for all $\sigma \in \K^p$ are cocycles whose
  cohomology classes $\{ [\phi_j] \}_{j=1 \cdots g}$ constitute a basis
  of $H^p(\K)$.
\label{prop:cocycles}
\end{proposition}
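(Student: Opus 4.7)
The plan is to prove both directions by leveraging the natural evaluation pairing between cochains and chains together with the duality $H^p(\K) \cong \mathrm{Hom}(H_p(\K),\F)$ that holds because coefficients lie in a field $\F$. Observe first that an annotation $\ann : \K^p \to \F^g$ is equivalent to the data of $g$ $p$-cochains $\phi_1,\ldots,\phi_g$, where $\phi_j(\sigma) = \ann_\sigma[j]$, and the linear extension satisfies $\ann_c[j] = \phi_j(c)$ for every $p$-chain $c$. Both implications thus reduce to controlling the values of the $\phi_j$'s on cycles and boundaries.

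For the direction (1) $\Rightarrow$ (2), I would first show that each $\phi_j$ is a cocycle. Take any $(p+1)$-simplex $\sigma$; since $\partial(\partial\sigma)=0$, the chain $\partial\sigma$ is a $p$-cycle homologous to $0$. Validity (condition 2) then forces $\ann_{\partial\sigma} = \ann_0 = 0$, hence $\phi_j(\partial\sigma) = 0$ for every $j$, which is the cocycle condition. Next, to see that the classes $\{[\phi_j]\}$ form a basis of $H^p(\K)$, note that validity combined with $g = \rank H_p(\K)$ forces the linear map $z \mapsto \ann_z$ from $p$-cycles to $\F^g$ to have kernel equal to the $p$-boundaries and to be surjective. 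A linear dependence $\sum_j \lambda_j[\phi_j] = 0$ in $H^p(\K)$ means that $\sum_j \lambda_j \phi_j$ is a coboundary, hence vanishes on every $p$-cycle; by surjectivity this implies $\lambda_j = 0$ for all $j$. Since $\dim H^p(\K) = \dim H_p(\K) = g$ over $\F$, these $g$ independent classes span.

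For the converse (2) $\Rightarrow$ (1), condition 1 is immediate from $g = \dim H^p(\K) = \rank H_p(\K)$ over a field. For condition 2, the easy direction is: if cycles $z_1,z_2$ satisfy $[z_1]=[z_2]$, write $z_1-z_2 = \partial\tau$ and use that each $\phi_j$ is a cocycle to get $\ann_{z_1-z_2}[j] = \phi_j(\partial\tau) = 0$. The main obstacle is the converse implication: from $\ann_{z_1}=\ann_{z_2}$ we must deduce $[z_1]=[z_2]$. Setting $z = z_1 - z_2$, the hypothesis gives $\phi_j(z) = 0$ for every $j$, so every class in $H^p(\K)$, being a linear combination of the basis $\{[\phi_j]\}$, evaluates to zero on $[z]$ through the natural pairing $H^p(\K) \times H_p(\K) \to \F$. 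Since this pairing is perfect over a field, we conclude $[z] = 0$ in $H_p(\K)$. This appeal to the non-degeneracy of the homology-cohomology pairing is the crucial step where working over a field, rather than an arbitrary ring, is essential.
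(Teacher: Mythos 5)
The paper does not prove this proposition; it is stated as a citation to Dey et al.\ \cite{DBLP:journals/corr/abs-1208-5018}, so there is no in-paper argument to compare against. Your proof is correct: the reduction of validity to the statement that $z\mapsto \ann_z$ induces an isomorphism $H_p(\K)\to\F^g$, the use of $\partial\partial=0$ to get the cocycle condition, and the appeal to the perfect Kronecker pairing $H^p(\K)\times H_p(\K)\to\F$ over a field are exactly the standard route, and each step (in particular the well-definedness of evaluating classes on classes, and injectivity plus equal dimensions giving surjectivity) is justified.
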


A valid annotation is thus a way to represent a cohomology basis.
The
algorithm for computing persistent cohomology consists in maintaining a valid
annotation for each dimension when inserting all simplices in the order of the
filtration.
Since we process the filtration in a direction
opposite to the cohomology sequence (as in Equation~\ref{cohom}),
we discover the death points of cohomology classes earlier
than their birth points. To avoid confusion, we still say
that a new cocycle (or its class) is born when we discover it
for the first time and an existing cocycle (or its class) dies when
we see it no more.

We present the algorithm and refer
to~\cite{DBLP:conf/compgeom/DeyFW14} for its validity. We insert
simplices in the order of the filtration. Consider an elementary inclusion
$\K_i \hookrightarrow \K_{i}\cup \{\sigma\}$, with $\sigma$ a $p$-simplex.
Assume that to every simplex $\tau$ of any dimension in $\K_i$ is attached an
annotation vector $\ann_\tau$ from
a valid annotation $\ann$ of $\K_i$. We describe how to obtain
a valid annotation for $\K_{i}\cup \{\sigma\}$ from that of $\K_i$. 
We compute the
annotation $\ann_{\partial \sigma}$ for the boundary $\partial\sigma$
in $\K_i$ and take actions as follows:

\noindent {\bf Case 1}: If $\ann_{\partial\sigma} = 0$, $ g \leftarrow
g+1$ and the annotation vector of any $p$-simplex $\tau \in \K_i$ is
augmented with a $0$ entry so that $\ann_\tau = [f_1, \cdots
,f_g]^T$ becomes $[f_1, \cdots ,f_g,0]^T$. We assign to the new
simplex $\sigma$ the annotation vector $\ann_\sigma =
[0,\cdots,0,1]^T$. According to Proposition~\ref{prop:cocycles}, this
is equivalent to creating a new cohomology class represented by
$\phi(\tau)=0$ for $\tau\neq \sigma$ and $\phi(\sigma)=1$.

\noindent {\bf Case 2}: If $\ann_{\partial\sigma} \neq 0$, we consider
the non-zero element $c_j$ of $\ann_{\partial\sigma}$ with maximal
index $j$.  We now look for annotations of those $(p-1)$-simplices
$\tau$ that have a non-zero element at index $j$ and process them as follows.  If
the element of index $j$ of $\ann_{\tau}$ is $f \neq 0$, we add $-f / c_j \cdot \ann_{\partial\sigma}$
to $\ann_{\tau}$. Note that, in the annotation matrix whose columns
are the annotation vectors, this implements simultaneously a series of elementary
row operations,
where each row $\phi_i$ receives $\phi_i \leftarrow \phi_i -
(\ann_{\partial\sigma}[i]/c_j)\times \phi_j$. As a result, all the
elements of index $j$ in all columns are now
$0$ and hence the entire row $j$ becomes $0$. We then remove the row $j$
and set $g \leftarrow g-1$. $\sigma$ is assigned $\ann_{\sigma}=0$. According to Proposition~\ref{prop:cocycles},
this is equivalent to removing the $j^{\text{th}}$ cocycle
$\phi_j(\tau) = \ann_\tau [j]$.

As with the original persistence algorithm,
the pairing of simplices is derived from the creation and
destruction of the cohomology basis elements.
%
\section{Data Structures and Implementation}
\label{sec:implementation}
In this section, we present our implementation of the annotation-based
persistent cohomology algorithm. We separate the representation of the
simplicial complex from the representation of the cohomology groups. 
\begin{figure}[t]
\centering
\includegraphics[width=13cm]{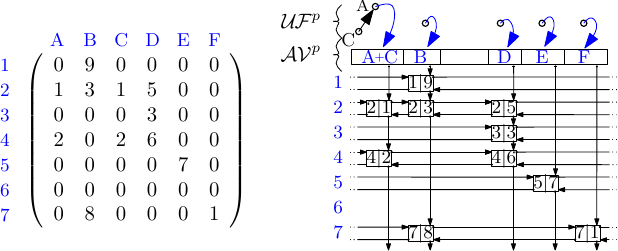}
\caption{Compressed annotation matrix of a matrix with integer coefficients.}
\label{fig:sparsemat_rep}
\end{figure}
%
\subsection{Representation of the Simplicial Complex}
\label{subsec:SC}
We represent the simplicial complex $\K$ in a data structure $\SC$ equipped
with the operation  {\sc Compute-boundary}$(\sigma)$ that  computes the boundary of a simplex
  $\sigma$. We denote by $\C_\partial^p$ the complexity of this
  operation where $p$ is the dimension of $\sigma$. Additionally, the simplices are ordered according to the filtration.

Two data structures to represent simplicial complexes are of
particular interest here. The first
one is the \emph{Hasse diagram}, which is the graph whose nodes are in
bijection with the 
simplices (of all dimensions) of the simplicial complex and an edge
links two nodes representing two simplices $\tau$ and $\sigma$ iff 
$\tau \subseteq \sigma$ and the dimensions of $\tau$ and $\sigma$ differ by $1$.
The second data structure is the \emph{simplex
  tree} introduced in~\cite{DBLP:journals/algorithmica/BoissonnatM14}, which is a specific
spanning tree of the Hasse diagram. 
For a simplicial complex $\K$ of dimension $k$ and a simplex $\sigma
\in \K$
of dimension $p$, the Hasse diagram has
size $O(k|\K|)$ and allows to compute {\sc Compute-boundary}$(\sigma)$
in time $\C_\partial^p = O(p)$, whereas the simplex tree has size $O(|\K|)$ and
allows to compute {\sc Compute-boundary}$(\sigma)$ in
time $\C_\partial^p = O(p^2D_{\text{m}})$, where $D_{\text{m}}$ is typically a small value related to the time needed
 to traverse the simplex tree. 
Both structures can be used in our setting. For readability, we will use a Hasse diagram in the following.
\subsection{The Compressed Annotation Matrix}
For each dimension $p$, the $p^{\text{th}}$ cohomology
group can be seen as a valid annotation for the $p$-simplices of
the simplicial complex. 
 Hence, an annotation $\ann : \K^p \rightarrow
\Field^g$ can be represented as a $g \times |\K^p|$ matrix with 
elements in $\Field$, where each column is an annotation vector associated to a $p$-simplex. We describe how to represent this
annotation matrix in an efficient way.

\medskip

\noindent
{\bf Compressing the annotation matrix:}
In most applications, the annotation matrix is sparse and we store it 
as illustrated
in Figure~\ref{fig:sparsemat_rep}. A column is represented as the singly-linked
list of its non-zero elements, where the list contains a pair $(i,f)$
if the $i^{th}$ element of the column is $f \neq 0$. The pairs in the
list are ordered according to row index $i$.
All pairs $(i,f)$  with same row index $i$ are linked in a
doubly-linked list.

\medskip

\noindent
{\bf Removing duplicate columns:} (see Figure~\ref{fig:sparsemat_rep})
To avoid storing duplicate columns, we use two data structures. The
first one, $\AV^p$, stores the annotation vectors and allows fast
search, insertion and deletion. $\AV^p$ can be implemented as a
red-black tree or a hash table. We denote by $\C_{\AV}^p$
the complexity of an operation in $\AV^p$. For example, if $\AV^p$
contains $n$ elements and $c_{\text{max}}$ is the length of the
longest column, we have $\C_{\AV}^p = O(c_{\text{max}} \log (n))$ for a
red-black tree implementation and $\C_{\AV}^p = O(c_{\text{max}})$ amortized for a hash-table. The simplices of the same
dimension that have the same annotation vector are now stored in a
same set and the various (and disjoint) sets are stored in a \emph{union-find}
data structure denoted $\UF^p$.  $\UF^p$ is encoded as a forest where each tree
contains the elements of a set, the root being the
``representative'' of the set. The trees of $\UF^p$ are in bijection
with the different annotation vectors stored in 
$\AV^p$ and  the root of each tree maintains a pointer to
the corresponding annotation vector in $\AV^p$. Each node
representing a $p$-simplex $\sigma$ in the simplicial
complex $\SC$ stores a pointer to an element of the tree of $\UF^p$ associated to the annotation vector $\ann_{\sigma}$. Finding the
annotation vector of $\sigma$ consists in getting the element it
points to in a tree of $\UF^p$ and then finding the root of the
tree which points to $\ann_{\sigma}$ in $\AV^p$. We avail
the following operations on $\UF^p$:

$\bullet$ {\sc Create-set}: creates a new tree containing
 one element.

$\bullet$ {\sc Find-root}: finds the root of a tree, given
  an element in the tree.

$\bullet$ {\sc Union-sets}: merges two trees.

The number
of elements maintained in $\UF^p$ is at most the number of simplices
of dimension $p$, i.e. $|\K^p|$. The operations {\sc Find-root} and
{\sc Union-sets} on $\UF^p$ can be computed in amortized time $O(\alpha(|\K^p|))$, where $\alpha(\cdot)$ is the very slowly
growing inverse Ackermann function (constant less than $4$ in
practice), and {\sc Create-set} is performed in constant time.
We will refer to this data structure as the \emph{Compressed Annotation Matrix}.

\medskip

\noindent
{\bf Operations:}
The compressed annotation matrix described above supports the following operations.
We define $c_{\text{max}}$ to be the maximal number of non-zero elements
in a column of the compressed annotation matrix (or equivalently in an
annotation vector) and $r_{\text{max}}$ to be the maximal number of
non-zero elements in a row of the compressed annotation matrix, during
the computation. We will express our complexities using $c_{\text{max}}$
and $r_{\text{max}}$:
%

$\bullet$ {\sc Sum-ann}$(\ann_1,\ann_2)$: computes the sum of two
  annotation vectors $\ann_1$ and $\ann_2$, and returns the lowest
  non-zero coefficient if it exists. The column elements are sorted by increasing row
  index, so the sum is performed in $O(c_{\text{max}})$ time.

$\bullet$ {\sc Search-ann}/{\sc Add-ann}/{\sc Remove-ann} $(\ann)$:  searches, adds
  or removes an annotation vector $\ann$ from $\AV^p$ in $O(\C_{\AV}^p)$ time.

$\bullet$ {\sc Create-cocycle}$()$: implements {\bf Case 1} of the
algorithm described in section~\ref{sec:annotations}. It inserts a new
column in $\AV^p$ containing one element $(i_{\text{new}},1)$, where
$i_{\text{new}}$ is the index of the created cocycle. This is
performed in time $O(\C_{\AV}^p)$.  We also create a new disjoint set
in $\UF^p$ for the new column. This is done in $O(1)$ time using {\sc
  Create-set}. {\sc Create-cocycle}$()$ takes $O(\C_{\AV}^p)$ in total.

$\bullet$ {\sc Kill-cocycle}$(\ann_{\partial \sigma},c_j,j)$: implements
  {\bf Case 2} of the algorithm. It finds all columns with a non-zero
  element at index $j$ and, for each such column $A$, it adds to $A$ the
  column $-f / c_j \cdot \ann_{\partial\sigma}$ if $f$ is the non-zero
  element at index $j$ in $A$. To find the columns with a non-zero
  element at index $j$, we use the doubly-linked list of row $j$. We
call {\sc Sum-ann} to compute the sums. The overall time needed for
all columns is $O(c_{\text{max}}\, r_{\text{max}})$ in the worst-case. Finally, we remove duplicate columns using operations
on $\AV^p$ (in $O(r_{\text{max}}\, \C_{\AV}^{p-1})$ time in the worst-case) and call {\sc Union-sets} on $\UF^{p-1}$ if two sets of
simplices, which had different annotation vectors before calling {\sc
  Kill-cocycle}, are assigned the same annotation vector. This is
performed in at most $O(r_{\text{max}}\, \alpha(|\K^{p-1}|) )$ time. The
total cost of {\sc Kill-cocycle} is $O(r_{\text{max}} ( c_{\text{max}}
+\C_{\AV}^{p-1} + \alpha(|\K^{p-1}|)))$. 
\subsection{Computing Persistent Cohomology}
Given as input a filtered simplicial complex represented in a data
structure $\SC$, we compute its persistence diagram.

\medskip

\noindent
{\bf Implementation of the persistent cohomology algorithm:}
We insert the simplices in the filtration order and update the data
structures during the successive insertions. The simplicial complex $\K$ is
stored in a simplicial complex data structure $\SC$ and we maintain, for each dimension
$p$, a compressed annotation matrix, which is empty at
the beginning of the computation. For readability, we add the following operation on the set of data structures:

$\bullet$ {\sc Compute-${\ann_{\partial \sigma}}$}($\sigma$): given a
$p$-simplex $\sigma$ in $\K$, computes its boundary in $\SC$ using {\sc
  Compute-boundary} (in
$O(\C_\partial^p)$ time). For each of the $p+1$ simplices in $\partial \sigma$, it
then finds
their annotation vector using  {\sc
  Find-root} in $\UF^{p-1}$ (in $O(p \alpha(|\K^{p-1}|))$
time). Finally, it sums all these annotation vectors together (with
the appropriate $+/-$ sign) using at most $p+1$ calls to  {\sc
  Sum-ann} (in $O(p\, g_{\text{m}})$ time). Note that, with the compression method,
two simplices in $\partial \sigma$ may point to the same annotation
vector; the computation is accelerated by adding such annotation vector
only once, with the appropriate multiplicative coefficient. The total
worst case complexity of this operation is $O(\C_\partial^p+p\, 
\alpha(|\K^{p-1}|)+p\, g_{\text{m}})$.

Let $\sigma$ be a $p$-simplex to be inserted. We compute the annotation
vector of $\partial \sigma$ using  {\sc Compute-${\ann_{\partial
      \sigma}}$}. Depending on the value of $\ann_{\partial \sigma}$, we call either {\sc
  Create-cocycle} or {\sc Kill-cocycle}. The algorithm computes the pairing of simplices from which one can deduce the persistence diagram. By reversing the pointers from the $\UF^p$s to the simplices in $\SC$, one can compute explicitly the representative cocycles of the basis classes and have an explicit representation of the cohomology groups along the computation.

\medskip

\noindent
{\bf Complexity analysis:} Let $k$ be the dimension and $m$ the
number of simplices of $\K$. Recall that $c_{\text{max}}$
and $r_{\text{max}}$ represent respectively the maximal number of non-zero elements
in an annotation vector and in a row of the compressed annotation
matrix, along the computation. Recall that, in dimension $p$,
$\C_\partial^p$ is the complexity of {\sc Compute-boundary} in
$\SC$ and $\C_{\AV}^p$
the complexity of an operation in $\AV^p$. $\alpha(\cdot)$ is the inverse Ackermann function.

The complexity for inserting $\sigma$ of
dimension $p$ is: 
$$\displaystyle O\left( \C_\partial^p + p(\alpha(|\K^{p-1}|) + c_{\text{max}}) +
  \C_{\AV}^p  + r_{\text{max}}(c_{\text{max}}
  + \C_{\AV}^{p-1} + \alpha(|\K^{p-1}|) )   \right)$$
Consequently, the total cost for computing the persistent
cohomology 
is:
$$\displaystyle O\left(m \times \left[  \C_\partial^k + k(\alpha(m) + c_{\text{max}}) + r_{\text{max}}(c_{\text{max}}
  + \C_{\AV} + \alpha(m) )   \right] \right)$$
Specifically, if we implement $\SC$ as a Hasse diagram and the
$\AV$s as hash-tables, we get $\C_\partial^k=O(k)$ and $\C_{\AV} =
O(c_{\text{max}})$. If we consider $\alpha(m)$ as a small constant
and remove it for
readability, we get that the total cost for computing persistent
cohomology is:
$$\displaystyle O(m\, c_{\text{max}} (k+r_{\text{max}}))$$
We show in
section~\ref{sec:experiments} that $c_{\text{max}}$ and $r_{\text{max}}$
remain small in practice. Hence, the practical complexity of the
algorithm is linear in $m$ for a fixed dimension.
%
%
\section{Reordering Iso-simplices}
\label{sec:lazy}
Many simplices, called iso-simplices, may have the same filtration
value. This situation is common when the filtration is induced by a
geometric scaling parameter. Assume that we want to compute the
cohomology groups $H^p(\K_{i+1})$ from $H^p(\K_i)$ where $\K_i
\subseteq \K_{i+1}$ and all simplices in $\K_{i+1} \setminus \K_i$
have the same filtration value. Depending on the insertion order of
the simplices of $\K_{i+1} \setminus \K_i$, the dimension of the
cohomology groups to be maintained along the computation may vary a
lot as well as the computing time. This may lead to a
computational bottleneck. We propose a heuristic to reorder
iso-simplices and show its practical efficiency in Section~\ref{sec:experiments}.

Intuitively, we want to avoid the creation of many ``holes'' of
dimension $p$ and want to fill them up as soon as possible with simplices of
dimension $p+1$. For example, in Figure~\ref{fig:up_down}, we want to
avoid inserting all edges first, which will create two holes that
will be filled when inserting the triangles. To do so,
we look for the maximal faces to be inserted and recursively insert
their subfaces. We conduct the recursion so as to minimize the maximum
number of holes. In addition, to avoid the creation of holes due to
maximal simplices that are incident, maximal simplices sharing
subfaces are inserted next to each other.
\begin{figure}[!t]
\centering
\includegraphics[width=16cm]{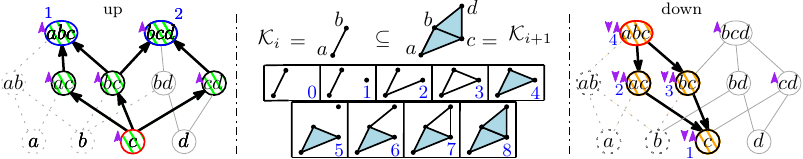}
\caption{Inclusion $\K_i \subseteq \K_{i+1}$. Left: upward traversal (in
  green) from simplex $\{c\}$. The ordering of the maximal cofaces
  appears in blue. Right: downward traversal (in orange) from
  simplex $\{abc\}$. The ordering of the subfaces appears in blue.}
\label{fig:up_down}
\end{figure}
We can describe the reordering algorithm in terms of a
graph traversal. The graph considered is the graph of the Hasse
diagram of $\K_{i+1} \setminus
\K_i$, defined in section~\ref{subsec:SC} (see
Figure~\ref{fig:up_down}).

Let $\sigma_1 \cdots \sigma_\ell$ be the iso-simplices of  $\K_{i+1}\setminus \K_i$,
 sorted so as to respect the inclusion order. We attach to each simplex two flags, a flag $F_{\text{up}}$ and a flag
$F_{\text{down}}$, set to {\sc 0} originally. When inserting a simplex
$\sigma_j$, we proceed as follows. We traverse the Hasse diagram
upward in a depth-first fashion and list the inclusion-maximal cofaces
of $\sigma_j$ in $\K_{i+1} \setminus \K_i$.  The flags $F_{\text{up}}$
of all traversed nodes are set to {\sc 1}
and the maximal cofaces are ordered according to the traversal. From each
maximal coface in this order, we then traverse the graph downward and order the
subfaces in a depth-first fashion: this last order will be the order
of insertion of the simplices. The flags $F_{\text{down}}$ of all traversed nodes
are set to {\sc 1}.  We stop the upward (resp. downward) traversal when
we encounter a node whose flag $F_{\text{up}}$ (resp. $F_{\text{down}}$) is set to {\sc 1}. We do
not insert either simplices that have been inserted previously.

By proceeding as above on all simplices of the sequence $\sigma_1
\cdots \sigma_\ell$, we define a new ordering which respects the
inclusion order between the simplices. Indeed, as the downward
traversal starts from a maximal face and is depth first, a face is
always inserted after its subfaces.  Every edge in the graph is
traversed twice, once when going upward and the other when going
downward.  Indeed, during the upward traversal, at each node $N$
associated to a simplex $\sigma_N$, we visit only the edges between
$N$ and the nodes associated to the cofaces of $\sigma_N$ and, during
the downward traversal, we visit only the edges between $N$ and
the nodes associated to the subfaces of $\sigma_N$. If
$\K_{i+1}\setminus \K_i$ contains $\ell$ simplices, the reordering
takes in total $O(\ell \times (\C_{\partial} + \C_{co\partial}))$
time, where $\C_{\partial}$ (resp. $\C_{co\partial}$) refers to the
complexity of computing the codimension $1$ subfaces (resp. cofaces)
of a simplex in the simplicial complex data structure $\SC$.
The reordering of the filtration can either be done as a preprocessing
step if the whole filtration is known, or on-the-fly  as only the
neighboring simplices  of a simplex need
to be known at a time. The reordering of a set of iso-simplices respects the inclusion order of
the simplices and the filtration, and therefore does not change the persistence diagram of the
filtered simplicial complex. This is a direct consequence of the stability
theorem of persistence
diagrams~\cite{DBLP:journals/dcg/Cohen-SteinerEH07}. 
However, it may change the pairing of simplices.
%
\section{Experiments}
\label{sec:experiments}
In this section, we report on the experimental performance of our
implementation. The compressed annotation matrix implementation of persistent cohomology 
(denoted by {\tt CAM} in the following) is part of the {\tt Gudhi} library~\cite{gudhi:FilteredComplexes,gudhi:PersistentCohomology,gudhilibrary_ICMS14,gudhi:urm}.

Given a filtered simplicial complex as input,
we measure the time taken by our implementation to compute its
persistent cohomology, and provide various statistics.
We compare the timings with state-of-the-art software computing persistent homology and cohomology. Specifically, we compare our
implementation with the {\tt Dionysus} library~\cite{dionysus_morozov} which provides
implementation for persistent
homology~\cite{DBLP:journals/dcg/EdelsbrunnerLZ02,DBLP:journals/dcg/ZomorodianC05}
and persistent cohomology~\cite{DBLP:journals/dcg/SilvaMV11} (denoted {\tt
  DioCoH}) with field coefficients in $\Z_p$, for any prime $p$. We
also compare our implementation with the {\tt PHAT} library (version 1.4)~\cite{phat_lib,DBLP:conf/icms/BauerKRW14} which provides an implementation of
the optimized algorithm for persistent
homology~\cite{Bauer:arXiv1303.0477,Chen11persistenthomology} (using
the {\tt --twist} option) as well as an implementation of persistent
cohomology~\cite{Bauer:arXiv1303.0477,DBLP:journals/corr/abs-1107-5665} (using the {\tt
  --dualize} option), with coefficients in $\Z_2$
only. {\tt DioCoH} and {\tt PHAT} have been reported to be the most
efficient implementation in
practice~\cite{Bauer:arXiv1303.0477,DBLP:journals/corr/abs-1107-5665}. All timings are measured on a Linux machine with $3.00$ GHz processor and $32$
GB RAM. {\tt Dionysus}, {\tt PHAT} and our implementation are
written in {\tt C++} and compiled with {\tt gcc 4.6.2} with
optimization level {\tt -O3}. Timings are all averaged over $10$ independent
runs. The symbols $\text{T}_\infty$ means that the computation lasted more than
$12$ hours.

\begin{figure}[t]
\centering
  \setlength{\tabcolsep}{8.2pt}
  \begin{tabular}{cccc}
    \hspace{6.0cm} {\tt DioCoH} & {\tt PHAT}$^\perp$ & {\tt PHAT} & {\tt CAM} \\
  \end{tabular}
  \setlength{\tabcolsep}{2.0pt}
  \begin{tabular}{| l l r r r r r r | r r | r r | r r | r r |}
    \hline
    Data & Cpx &  $|\mathcal{P}|$& $D$& $d$&$\rho_{\text{max}}$&
    $k$&$|\K|$ & $\Z_2$ & $\Z_{11}$ & $\Z_2$ & $\Z_{11}$ & $\Z_2$ & $\Z_{11}$ & $\Z_2$ & $\Z_{11}$ \\
    \hline

    {\bf Cy8}& {\tt Rips} & $6040$&$24$&$2$&$0.41$&$16$&$21\times 10^6$
    & $420$ & $4822$ & $44$  & $-$ & $5.4$ & $-$ & $6.4$ & $6.5$ \\

    {\bf S4}& {\tt Rips} & $507$ & $5$ & $4$  & $0.715$  & $5$ &
    $72\times 10^6$  & $943$ & $1026$ & $96$ & $-$  & $3554$ & $-$ & $22.5$ & $23.2$ \\

{\bf L57} & {\tt Rips} & $4769$ & $-$ & $3$ & $0.02$ & $3$ & $34\times

10^6$ & $239$ & $240$ & $35$ & $-$ & $970$ & $-$ & $9.3$ & $9.5$ \\

    {\bf Bro}& {\tt Wit}  & $500$   & $25$  &  ?  & $0.06$  & $18$  &
    $3.2\times 10^6$  & $807$ & $\text{T}_{\infty}$ & $6.5$ & $-$ & $0.88$ & $-$ & $2.7$ & $2.9$ \\

    {\bf Kl}& {\tt Wit} & $10000$ & $5$ & $2$ & $0.105$ & $5$ & $74

    \times 10^6$ & $569$ & $662$ & $100$ & $-$ & $1771$ & $-$ & $19.7$ & $19.9$ \\

{\bf L35} & {\tt Wit} & $700$ & $-$ & $3$ & $0.06$ & $3$ & $18\times

10^6$ &$109$&$110$&$18$&$-$&$865$&$-$& $5.1$ & $5.1$ \\

    {\bf Bud}& {\tt $\alpha$Sh}  & $49990$  & $3$  & $2$  & $\infty$  &
    $3$ & $1.4\times 10^6$ & $30.0$ & $30.9$ & $2.5$ & $-$ & $0.33$ & $-$ & $0.7$ & $0.7$ \\
    
    {\bf Nep} & {\tt $\alpha$Sh} & $2\times
    10^{6}$&$3$&$2$&$\infty$&$3$&$57\times 10^6$ & $\text{T}_{\infty}$ &
    $\text{T}_{\infty}$ & $166$ & $-$ & $33$ & $-$ &
    $39.5$ & $40.2$ \\
    \hline
  \end{tabular}
  \caption{Data, timings (in seconds) and statistics.}
  \label{fig:table_time}
\end{figure}
We construct three families of simplicial complexes~\cite{DBLP:books/daglib/0025666} which are
of particular interest in topological data analysis: the Rips
complexes (denoted {\tt Rips}), the relaxed witness complexes (denoted
{\tt Wit}) and the
$\alpha$-shapes (denoted {\tt $\alpha$Sh}). These complexes depend on
a relaxation parameter $\rho$. When the data points are embedded, the
complexes are constructed up to embedding dimension, with euclidean
metric. They are constructed up to the intrinsic dimension of the
space with intrinsic metric otherwise. We use a variety of both real and synthetic datasets:
{\bf Cy8} is a set of
points in $\R^{24}$, sampled from the space of conformations of the
cyclo-octane molecule, which is the union of two
intersecting surfaces;
 {\bf S4} is a set of points sampled from the unit $4$-sphere in
$\R^5$; {\bf L57} and {\bf L35} are sets of points in the \emph{lens spaces}
$L(5,7)$ and $L(3,5)$ respectively, which are non-embedded spaces;
{\bf Bro} is a set of $5\times 5$ {\it high-contrast patches}
derived from natural images, interpreted as vectors in $\R^{25}$, from
the Brown database;
 {\bf Kl} is a set of points sampled from the
surface of the figure eight Klein Bottle embedded in $\R^5$;
 {\bf Bud} is a
set of points sampled from the surface of the {\it Happy Buddha} (\url{http://graphics.stanford.edu/data/3Dscanrep/})
in $\R^3$;
and {\bf Nep} is a set of points sampled from the surface of the {\it
  Neptune statue} (\url{http://shapes.aimatshape.net/}). Datasets are listed in Figure~\ref{fig:table_time} with
details on the sets of points $\mathcal{P}$,
their size $|\mathcal{P}|$, the ambient dimension $D$, the intrinsic
dimension $d$ of the object the sample points belong to (if known),
the threshold $\rho_{\text{max}}$, the dimension $k$ of the simplicial
complexes and the size $|\K|$ of the simplicial complexes.

\medskip

\noindent
{\bf Time Performance:} As {\tt Dionysus} and {\tt PHAT} encode
explicitely the boundaries of the simplices, we use a Hasse diagram
for implementing $\SC$. We thus have the same time complexity for
accessing the boundaries of simplices. 
We use the persistent homology
algorithm of {\tt PHAT} with options {\tt --twist --sparse\_pivot\_column} and the
persistent cohomology algorithm (noted {\tt PHAT}$^{\perp}$) with
option {\tt --twist --sparse\_pivot\_column --dualize}. The {\tt --twist} algorithm is the most 
efficient for our experiments. The {\tt --sparse\_pivot\_column} representation for the "pivot column" is the most efficient 
representation, for our experiemnts, that can generalize to arbitrary coefficients for persistent 
homology\footnote{{\tt PHAT} also contains 
pivot column representations specifically optimized for $\Z_2$ coefficients using low level bit operations. These bit representations 
usually accelerate by a constant factor the computation of persistence on complexes involving a lot of arithmetic operations, in particular the persistent homology algorithm 
{\tt PHAT} on {\bf S4} and {\bf Kl} in Figure~\ref{fig:table_time}. They do not change the asymptotic analysis done in this paragraph.}.
%
%
%
As illustrated in Figure~\ref{fig:table_time}, the
persistent cohomology algorithm of {\tt
  Dionysus} is always several times slower than our
implementation. 
Moreover, {\tt DioCoH} is sensitive to the field used, as illustrated
in the case of {\bf Cy8} and {\bf Bro}. On the contrary, {\tt CAM} shows almost identical performance for $\Z_2$
and $\Z_{11}$ coefficients on all examples. 
The persistent cohomology algorithm {\tt
PHAT}$^{\perp}$ performs better than {\tt DioCoH}.
However, {\tt CAM} is still between $2.3$ and $6.9$ times faster.

The persistent homology algorithm of {\tt PHAT} shows good performance
in the case of the alpha shapes and on {\bf Cy8} and {\bf Bro}: {\tt
  CAM} and {\tt PHAT} have close timings.  
However, {\tt CAM} scales better to more complex examples
(such as {\bf S4}, {\bf L57}, {\bf Kl} and {\bf L35}, which have higher
intrinsic dimension and more complex topology). Indeed, the running
time per simplex of {\tt CAM} remains stable on all examples and for
all field coefficients (between $2.7\times 10^{-7}$ and $9.1\times 10^{-7}$ seconds per simplex).

\begin{figure}[t]
  \centering
  \setlength{\tabcolsep}{3.7pt}
  \begin{tabular}{|l || l |}
\hline
    \setlength{\tabcolsep}{3.5pt}
   \begin{tabular}{r|r|r}
      {\bf Nep} &  $|M|$ & $\#\Field op.$  \\
      \hline
      Compression & $126057$ & $84\times 10^6$ \\
      $\neg$Compression & $574426$ & $3860 \times 10^6$ \\
    \end{tabular}
    &
  \begin{tabular}{r|r|r}
      {\bf Nep} &  average & maximum  \\
      \hline
      $c_{\text{av}} , c_{\text{max}}$ & $0.79$ & $18$ \\
      $r_{\text{av}} , r_{\text{max}}$ & $1.02$ & $18$ \\
    \end{tabular}
\\
\hline
\end{tabular}

  \setlength{\tabcolsep}{4.33pt}
  \begin{tabular}{|l || l |}
\hline
  \begin{tabular}{r|r}
      {\bf Bro} &  time \\
      \hline
       Reordering & $2.9$ s.\\
      $\neg$Reordering & $14.2$ s. \\
    \end{tabular}

    &
    \begin{tabular}{r|ccc|ccc}
     {\bf Bro} & &$\Z_{11}$ & & &$\Q$& \\
      \hline
   &   $M_{\text{DS}}$ & $\ann_{\partial \sigma}$ &
      $M_{\text{op}}$ & $M_{\text{DS}}$ & $\ann_{\partial \sigma}$ &
     $M_{\text{op}}$ \\

    &  $71\%$ & $19\%$ & $10\%$ & $67\%$ & $21\%$ & $12\%$ \\
    \end{tabular}
    \\
\hline
  \end{tabular}
  \caption{Statistics on the effect of the optimizations.}
  \label{fig:table_stat}
\end{figure} 

\medskip

\noindent
{\bf Statistics and Optimization:}
Figure~\ref{fig:table_stat} presents statistics about the
computation. %
The top table presents, on the left, the effect of the
compression (removal of duplicate columns) of
the annotation matrix on the number of elements $|M|$ stored in the
sparse representation and the number of changes $\#\Field op.$ in the
matrix during the computation of the persistence diagram of {\bf Nep}. We note a reduction factor of
$4.5$ for the size of the matrix, and we proceed to $46$ times less
field operations with the compression. Considering {\bf Nep} is $57$
million simplices, we proceed to less than $1.5$ field operations per simplex on average. The right part of the table shows the average and
maximum number of non-zero elements in a column when proceeding to a
sum of annotation vectors ({\sc Sum-ann}) and the average and maximum
number of non-zero elements in a row when proceeding to its reduction
({\sc Kill-cocycle}). These values are key
  variables ($c_{\text{max}}$ and $r_{\text{max}}$ respectively) in the complexity analysis of
  the algorithm. We note that these values remain really small.
%
%
The bottom table presents the effect of the reordering strategy on the
example {\bf Bro}. We note that reordering iso-simplices makes the computation $4.9$ faster. Finally, the right side of the table
presents how the computing time is divided into
maintaining the compressed annotation matrix (noted
$M_{DS}$), computing the annotation vector $\ann_{\partial \sigma}$
and modifying the values of the elements in the compressed annotation
matrix (noted $M_{\text{op}}$). The percentage are given when computing
persistent cohomology with $\Z_{11}$ and $\Q$
coefficients. The computational complexity of field operations $\langle \Field, +, \cdot,
-,/, 0, 1\rangle$ depends on the field we use. For $\Z_{11}$, or any
field of small cardinality,  the operations can be precomputed and
accessed in constant time. The
field operations in $\Q$ are more costly. Specifically, an element $q$ in $\Q$ is represented as a pair of coprime integers $(r,s)$ such that $q=r/s$, and field operations may require gcd computation to ensure that nominator and denominator remain coprime.
However, the computational
time of {\tt CAM} is quite
insensitive to the field we use.
Specifically, as it minimizes the number of matrix changes using the
compression method, the
computational time is only increased by $8\%$ when computing
persistence with $\Q$ coefficients instead of $\Z_{11}$, whereas the
computation involving field operations takes $34\%$ more time.

In all our experiments, the size of the compressed annotation matrix
is negligible compared to the size of the simplicial
complex. Consequently, combined with the simplex tree data structure~\cite{DBLP:journals/algorithmica/BoissonnatM14} for representing the simplicial complex, we have been able to compute
the persistent cohomology of simplicial complexes of several hundred
million simplices in high dimension.


\medskip

\noindent
{\bf Acknowledgement:}
The research leading to these results has received funding from the
European Research Council (ERC)
under  the  European  Union's  Seventh  Framework  Programme
(FP/2007-2013)  /  ERC  Grant
Agreement No. 339025 GUDHI (Algorithmic Foundations of Geometry
Understanding in Higher Dimensions).

This research is also partially supported by the 7th Framework Programme for Research of the European Commission, under FET-Open grant number 255827 (CGL Computational Geometry
Learning). This research is also partially supported by NSF (National Science Foundation, USA)
grants CCF-1048983 and CCF-1116258.
%
%
\newpage
\bibliographystyle{plain}
\bibliography{bibliography}

%

\end{document}